\newtheorem{theorem}{Theorem}[section]
\newtheorem{definition}[theorem]{Definition}
\newtheorem{proposition}[theorem]{Proposition}
\newtheorem{corollary}[theorem]{Corollary}
\def\altbox{\hspace{2mm}\nolinebreak\null\nolinebreak\hfill\Box}
\newenvironment{proof}{\noindent {\bf Proof:}}{$\altbox$\bigskip}
\tikzstyle{blueVertex}=[circle,draw=blue!50,fill=blue!20,thick]
\tikzstyle{darkBlueVertex}=[circle,draw=blue!75,fill=blue!55,thick]
\tikzstyle{block} = [rectangle, draw=black!60, fill=blue!20, top color=green!60,
\tikzstyle{redVertex}=[circle,draw=red!60,fill=red!30,thick]
\tikzstyle{inVertex} = [ ]
\tikzstyle{edge} = [draw,thick,->]
\tikzstyle{edgeDot} = [draw,dashed,->]
\tikzstyle{selectedEdge} = [draw,line width=5pt,-,blue!50]
\tikzstyle{redEdge} = [draw,line width=5pt,-,red!50]
\tikzstyle{bigEdge} = [draw,line width=5pt,->,red!75,>=triangle 60] 
\tikzstyle{bigBlueEdge} = [draw,line width=2pt,->,blue!75,>=triangle 60] 
\tikzstyle{weight} = [font=\small]
\tikzstyle{block} = [rectangle, draw, fill=blue!20, text width=5em, text
\newcommand{\At}{\A_{\tau}}
\newcommand{\gor}{\;\;\big|\;\;}
\newcommand{\Nilp}{{0}}
\newcommand{\A}{\mbox{$\mathbb{A}$}}
\newcommand{\nat}{\mathbb{N}}
\newcommand{\Real}{\mathbb{R}}
\newcommand{\name}[1]{\mbox{\sc #1}}
\newcommand{\sos}[2]{%
\def\arraystretch{1,3}\begin{array}{c}#1\\\hline #2\end{array}%
}
\newcommand{\nar}[2]{\xrightarrow{#1}_{#2}}
\newcommand{\Nar}[2]{\stackrel{#1}{\Rightarrow}_{#2}}
\newcommand{\fase}{\texttt{FASE}}
\newcommand{\fifo}{\mbox{{\sf Fifo}}}
\newcommand{\pipe}{\mbox{{\sf Pipe}}}
\newcommand{\buff}{\mbox{{\sf Buff}}}
\newcommand{\mem}{\mbox{{\sf Mem}}}
\newcommand{\bc}{\mbox{{\sf BC}}}
\newcommand{\Pp}{\mbox{$\mathbb{P}$}}
\newcommand{\RT}{\mbox{\sf RT}}
\newcommand{\DL}{\mbox{\sf DL}}
\newcommand{\RTS}{\mbox{\sf RTS}}
\newcommand{\RRTS}{\mbox{\sf rRTS}}
\newcommand{\Chi}{\text{\large\raisebox{0.45ex}{$\chi$}}}
\title{Evaluating the Efficiency of Asynchronous Systems with $\fase$%
\thanks{
This work was supported by the PRIN Project
`Paco:Performability-Aware Computing: Logics, Models, and
Languages'.}}
\author{F. Buti, M. Callisto De Donato, F. Corradini, M.R. Di Berardini\\
{\small School of Science and Technology, University of Camerino}\\
{\small \{federico.buti, massimo.callisto, flavio.corradini,
mariarita.diberardini\}@unicam.it}\vspace{0.4cm}\\
W. Vogler\\
{\small Institut f\"ur Informatik, Universit\"at Augsburg}\\
{\small vogler@informatik.uni-Augsburg.de}
}
\date{}
\begin{document}
\maketitle
\begin{abstract}
In this paper, we present \fase\ ({\tt F}aster {\tt A}synchronous {\tt S}ystems
{\tt E}valuation), a
tool for evaluating the worst-case efficiency of asynchronous systems. The tool
is based on some
well-established results in the setting of a timed process algebra (PAFAS: a
Process Algebra for
Faster Asynchronous Systems). To show the applicability of \fase\ to concrete
meaningful examples,
we consider three implementations of a bounded buffer and use \fase\ to
automatically evaluate their
worst-case efficiency. We finally contrast our results with previous ones where
the efficiency of
the same implementations has already been considered.
\end{abstract}

\section{Introduction}\label{sec:intro}
PAFAS~\cite{CVJ02} has been proposed as a useful tool for comparing the
worst-case
efficiency of asynchronous systems. It is a CCS-like process description
language
\cite{Mil89} where basic actions are atomic and instantaneous but have
associated a time bound
interpreted as the maximal time delay for their execution. These upper time
bounds can be used to
evaluate efficiency, but they do not influence functionality (which actions are
performed); so
compared with CCS also PAFAS treats the full functionality of asynchronous
systems. In~\cite{CVJ02},
processes are compared via a variant of the testing approach developed by
De~Nicola and Hennessy
in~\cite{DNH84}. Tests considered in~\cite{CVJ02} are test environments (as
in~\cite{DNH84})
together with a time bound. A process is embedded into the environment (via
parallel composition)
and satisfies a (timed) test, if success is reached before the time bound in
{\em every} run of the
composed system, i.e.\ even in the worst case. This gives rise to a faster-than
preorder
over processes that is naturally an {\it efficiency preorder}. Moreover, this
efficiency preorder
can be characterised as inclusion of a special kind of {\em refusal traces},
which provide decidability of
the testing preorder for finite state processes.

In~\cite{CV05}, it has been shown that the faster-than preorder provided
in~\cite{CVJ02} can
equivalently be defined on the basis of a performance function that gives the
worst-case time needed
to satisfy any test environment (or user behaviour). If the above timed testing
scenario is
adapted by considering only test environments that want $n$ tasks to be
performed as fast as possible
(possibly in parallel), this performance function is {\em asymptotically
linear}. This provides us
with a {\em quantitative} measure of system performance, essentially a function
from natural
numbers to natural numbers called {\em response performance function} that
measures how
fast the system under consideration responds to requests from the environment.

In this paper, we present \fase, a corresponding tool that supports the
evaluation of this
function for a given system. In order to show the applicability of \fase\ to
concrete meaningful examples, we consider three different implementations of a
bounded buffer and
use \fase\ to automatically evaluate their efficiency. The three implementations
are called \fifo,
\pipe\ and \buff. \fifo\ is a bounded-length first-in-first-out queue, \pipe\ is
a sequence of one
place buffers connected end-to-end and \buff\ is an array used in a circular
fashion. We prove that
\fifo\ is always more efficient than \pipe\ and \buff, and that \buff\ is more
efficient than \pipe\
only if the number of requests is sufficiently small w.r.t. the size of the
buffer. These results
are quite different from those presented in~\cite{CDV01} (see
Section~\ref{sec:concluding}) where
the efficiency of the same buffer implementations has been compared by means of
the efficiency
preorder defined in~\cite{CVJ02}. The reason is that here (as in~\cite{CV05}) we
only consider
a specific class of user behaviours.

The rest of this paper is organised as follows. Section~\ref{sec:pafas} recalls
PAFAS and
the technical details we need to define the response performance.
Section~\ref{sec:algos}
presents \fase\ and its main algorithms. Section~\ref{sec:casestudy} describes
the three buffer 
implementations and states
our main results. Finally, Section~\ref{sec:concluding} presents some concluding
remarks.

\section{PAFAS}\label{sec:pafas}
In this section we briefly introduce PAFAS, its operational
semantics and the performance function to evaluate worst-case efficiency. We
refer the reader
to~\cite{CVJ02} and~\cite{CV05} for more details.
We use the following notation: $\A$ is an infinite
set of basic actions with a special action $\omega$, which is  reserved for
observers (test
processes) in the testing scenario to signal the success of a test. The
additional action $\tau$
represents an internal activity that is unobservable from other components.
Actions in $\At=\A\cup\{\tau\}$(ranged over by $\alpha,\beta,\cdots$) can
let time $1$ pass before their execution, i.e.\ 1 is their maximal delay. After
that time, they
become {\em urgent} actions. The set of urgent actions is $\underline{\A}_\tau
=\{\underline{a}\ |\ a \in \A\} \cup {\tau}\}$ and it is ranged over by
$\underline{\alpha},\underline{\beta},\cdots$. Furthermore,
$\Chi$ is the set of process variables $x,y,z,\dots$ used for recursive
definitions. A {\it general
relabelling function} (incorporating relabelling and hiding)
is a function $\Phi:\At \to \At$ where the set $\{\alpha\in\At \,|\, \emptyset
\neq \Phi^{-1}(\alpha)\ne\{\alpha\}\}$ is finite and $\Phi(\tau)=\tau$.

\begin{definition}\rm\label{PCTIP} ({\it Timed Processes}) The set $\Pp$ of {\it
(timed) processes}
is the set of closed (i.e. without free variables) and guarded (i.e. variable
$x$ in a $\mu x.P$
only appears within the scope of a prefix $\alpha.()$, where $\alpha \in \At$)
terms generated by
the following grammar:
$$ P  ::= \Nilp \gor \gamma . P  \gor P+P
\gor P\|_A P \gor P[\Phi] \gor x  \gor \mu x.P $$
where $\gamma$ is $\alpha$ or $\underline{\alpha}$ for some $\alpha \in \At$,
$\Phi$ a general
relabelling function, $x\in\Chi$ and $A\subseteq\A$ possibly infinite.
\end{definition}

A brief description of our operators now follows. $\Nilp$ is the Nil-process,
which cannot perform
any action, but may let time pass without limit \footnote{A trailing $\Nilp$
will often be omitted,
so e.g.\ $a.b+c$ abbreviates $a.b.\Nilp+c.\Nilp$.}; $\alpha.P$ and
$\underline{\alpha}.P$ is
(action-) prefixing, known from CCS. In particular, process $a.P$ performs $a$
with a {\em maximal}
delay of 1; hence, it can either perform $a$ immediately, or can idle for time 1
and become
$\underline{a}.P$. In the latter case, the idle-time has elapsed and action $a$
must either occur
or be deactivated (in a choice-context) before time may pass further. Our
processes are {\em
patient}: as a stand-alone process, $\underline{a}.P$ has no reason to wait; 
but as a component in
$\underline{a}.P\|_{\{a\}} a.Q$, it has to wait for synchronisation on $a$ and
this can take up to
time $1$, since the component $a.Q$ may idle this long. $P_1 + P_2$ models the
choice between two
conflicting processes $P_1$ and $P_2$.  $P_1\|_A P_2$ is the TCSP-like parallel
composition
of two processes $P_1$ and $P_2$ that run in parallel and have to synchronise on
all
actions from $A$ \cite{bhr84}.
In the following we write $\|$
as a shorthand for $\|_{\A\backslash \{\omega\}}$. $P[\Phi]$ behaves as $P$ but
with the actions
changed according to $\Phi$. Finally, $\mu x.P$ models a recursive definition;
recursive equations are a
common way of defining processes.

We now define the refusal traces of a process $P$. Intuitively, a refusal trace
records, along a
computation, which actions $P$ can perform ($P \nar{\alpha}{r} P'$, $\alpha \in
\At$) and which
actions $P$ can refuse to perform ($P\nar{X}{r} P'$, $X \subseteq \A$). A
transition like
$P\nar{X}{r} P'$ is called a {\it (conditional) time step}. The actions in the
set $X$ are not
urgent (see rule Pref$_{r2}$ in Fig.\ \ref{refusal}) so $P$ is justified in not
performing them but
performing a time step instead. Since other actions might be urgent and cannot
be refused,
$P$ as a stand-alone-process might actually be unable to let time pass.
But if $P$ is a component of a larger system, these actions might be further
delayed due to
synchronisation with some other components, and a time step is possible.
Whenever $P$ can make a time step in {\em any} context
(i.e\ if $P\nar{X}{r} P'$ and  $X =\A$), we say that $P$ performs a {\em full
time step} and also
write $P \nar{1}{} P'$.

\begin{definition}\rm\label{PREFSOS} ({\it Refusal operational semantics})
The SOS-rules in Fig.\ \ref{refusal} (plus symmetric rules for Par$_{a1}$ and
Sum$_a$
for actions of $P_2$) define the transition relations $\nar{\alpha}{r}\subseteq
(\Pp\times \Pp)$ for
$\alpha \in \At$ and $\nar{X}{r}\subseteq (\Pp \times\Pp)$ for $X\subseteq\A$.
\end{definition}

\begin{figure*}[thb]
$$\begin{array}{c}
\name{Pref}_{a1}\;\sos{}{\alpha .P \nar{\alpha}{r}P}
\quad\quad
\name{Pref}_{a2}\;\sos{}{\underline{\alpha} .P \nar{\alpha}{r}P}
\quad\quad
\name{Sum}_a\;\sos{P_1\nar{\alpha}{r}P'_1}{P_1+P_2\nar{\alpha}{r}P'_1}
\\[4ex]
\name{Par}_{a1}\;\sos{\alpha\notin A,\;P_1\nar{\alpha}{r}P'_1}{P_1\|_A
P_2 \nar{\alpha}{r}P'_1\|_A P_2}
\quad\quad
\name{Par}_{a2}\;\sos{\alpha\in
A,\;P_1\nar{\alpha}{r}P'_1,\;P_2\nar{\alpha}{r}P'_2} {P_1\|_A P_2
\nar{\alpha}{} P'_1\|_A P'_2}
\quad
\\[4ex]
\name{Rel}_a\;\sos{P\nar{\alpha}{r}P'}{P[\Phi]\nar{\Phi(\alpha)}{r}P'[
\Phi]}
\quad\quad
\name{Rec}_a\;\sos{P\{\mu x.P/x\} \nar{\alpha}{r} P'}{\mu x.P
\nar{\alpha}{r}P'}\\
[8ex]
\name{Nil}_r\;\sos{}{\Nilp\nar{X}{r}\Nilp}
\quad\quad
\name{Pref}_{r1}\;\sos{}{\alpha.P\nar{X}{r}\underline{\alpha}.P}
\quad\quad
\name{Pref}_{r2}\;\sos{\alpha\notin
X\cup\{\tau\}}{\underline{\alpha}.P\nar{X}{r}\underline{\alpha}.P}
\\
[4ex]
\name{Par}_r\;\sos {\forall_{i=1,2}\;P_i\nar{X_i}{r}P'_i,\; \textstyle
X\subseteq(A\cap\bigcup\nolimits_{i=1,2}X_i)\cup
((\bigcap\nolimits_{i=1,2}X_i)\setminus A)} {P_1\|_A P_2
\nar{X}{r}P'_1\|_A P'_2}
\\
[4ex]
\name{Sum}_r\;\sos {\forall_{i=1,2}\;P_i\nar{X}{r}P'_i}
{P_1+P_2\nar{X}{r}P'_1+P'_2}
\quad\quad
\name{Rel}_r\;\sos
{P\nar{\Phi^{-1}(X\cup\{\tau\})\setminus\{\tau\}}{r}P'}{P[\Phi]\nar{X}
{r}P'[\Phi]}
\\
[4ex]
\name{Rec}_r\;
\sos{P\{\mu x.P/x\} \nar{X}{r}P'}{\mu x.P\nar{X}{r}P'\{\mu x.P/x\}}\\
\end{array}$$
\caption{The Refusal Operational Semantics of PAFAS
processes.}\label{refusal}
\end{figure*}
The rules in Fig.~\ref{refusal} explain the operational semantics of PAFAS
processes. A process like
$\alpha.P$ can either perform action $\alpha$ immediately and then become $P$
(rule
\name{Pref}$_{a_1}$), or can let time $1$ pass and refuse any set of actions
(rule
\name{Pref}$_{r1}$).
A process $\underline{\alpha}.P$ can perform an action $\alpha$ (rule
\name{Pref}$_{a2}$) and on its own cannot delay such an execution (rule
\name{Pref}$_{r2}$).
Since internal action $\tau$ has never to be synchronised, a process prefixed by
an urgent $\tau$
cannot make a time step. Another rule worth noting is \name{Par}$_r$ that
defines which actions a
parallel composition can refuse during a time step. The intuition is that
$P_1\|_A P_2$ can refuse
an action $a$ if either $a \not \in A$ ($P_1$, $P_2$ are not forced to
synchronise on $a$) and both
$P_1$, $P_2$ can refuse $a$, or $a \in A$ ($P_1$, $P_2$ are forced to
synchronise on $a$) and either $P_1$ or $P_2$ can refuse $a$. The other rules
are as expected.

For sequences $w \in (\At \cup 2^{\A})^\ast$, we define $P \nar{w}{r} P'$ as
expected:
$P\nar{w}{r}P'$ if either $w=\varepsilon$ (the empty sequence) and $P'=P$ or
there is $Q \in \Pp$
and $\mu \in (\At \cup 2^{\A})$ such that $P\nar{\mu}{r} Q \nar{w'}{r} P'$ and
$w=\mu w'$.
Similarly, we define $P \nar{w}{} P'$ for $w \in (\At \cup \{1\})^\ast$. In the
latter case,
$\zeta(w)$ is the duration of $w$, i.e. the number of full time steps in $w$. We
write $P
\Nar{v}{r}P'$ ($P \Nar{v}{}P'$) if $P\nar{w}{r}P'$ ($P\nar{w}{}P'$, resp.) and
$v=w/\tau$ ($v$ is
the sequence $w$ with all $\tau$'s removed). Finally,
$\RT(P)=\{w\,|\,P\Nar{w}{r}\}$  and
$\DL(P)=\{w\,|\,P\Nar{w}{}\}$ are the sets of {\it refusal traces} and {\em
discrete traces}
(resp.) of $P$.

For processes $P\ ,Q \in \Pp$, $\RT(P) \subseteq \RT(Q)$ implies $\DL(P)
\subseteq \DL(Q)$:
 $\DL(P)$ corresponds to the set of traces $w \in \RT(P)$ where $X = \A$ for all
refusal sets $X$ 
in $w$. Finally, the {\it refusal transition system} $\RTS(P)$ of $P$ is defined
as the set of 
all transitions $Q \nar{\mu}{r} Q'$ with $\mu \in \At$ or $\mu \subseteq \A$
where $Q$ is reachable 
from $P$ via such transitions. It is easy to prove that $\RTS(P \parallel_{A}
Q)$ can be determined 
from $\RTS(P)$ and $\RTS(Q)$ according to the SOS-rules for parallel composition
given 
in Fig. \ref{refusal}.


In the timed testing of~\cite{CVJ02}, $P$ satisfies a timed test (observer $O$
with special success
action $\omega$ plus time bound $D$) if every discrete trace of $P\| O$
performs $\omega$ before time $D$; $P$ is {\em faster than} $Q$, $P\sqsupseteq
Q$,
if $P$ satisfies all timed tests that $Q$ satisfies. This
preorder is a qualitative notion since a timed test is either satisfied or not,
and a process is
more efficient than another or not.

One of the main results in \cite{CVJ02} is that the faster-than preorder can be
characterised
by refusal-trace-inclusion, i.e.\  $P\sqsupseteq P'$ iff $\RT(P) \subseteq
\RT(P')$ (see Theorem
5.13 in~\cite{CVJ02}). A new formulation of this preorder has been provided in
~\cite{CV05} 
(see Prp. 9) that brings to light its quantitative nature; the new formulation
is given using the 
following performance function:

In~\cite{CV05}, Prop. 9 provides 

\begin{definition}\rm\label{def:performance} ({\it Performance})
Let $P\in \Pp$ be a process and $O \in \Pp$ be a test process. We define the
{\em performance
function} $p$ as:

\begin{center}
$p(P,O) = \sup \{\, n \in \nat_0 \,|\, \exists v \in \DL(P \| O): \zeta(v) = n
\mbox{ and } v 
\mbox{ does not contain } \omega \,\}$\\
\end{center}

\noindent If the right-hand side has no maximum, the
supremum is $\infty$. The performance function $p_P$ is defined by $p_P(O) =
p(P,O)$, and we write
$P\sqsupseteq Q$ if $p_P(O) \leq p_Q(O)$ for each $O$.
\end{definition}

The performance function $p$ (as well as the preorder $\sqsupseteq$) contrasts
processes w.r.t.\ all
possible test environments. In some cases, this might be too demanding and one
can make some
reasonable assumption about the user behaviour. Consider a scenario where users
have a
number of requests (made via $in$-actions) that they want to be answered (via
$out$-actions) as
fast as possible. This class of users is defined as ${\cal U} = \{U_n \,|\, n
\geq 1\}$ where $U_1
\equiv
\underline{in}.\underline{out}.\underline{\omega}$ and $U_{n} = U_{n-1} \;
\|_{\{\omega\}}\;
\underline{in}.\underline{out}.\underline{\omega}$ (for any $n>1$).
Given these users, we can define the {\em response performance} $rp$ of a
testable process $P$ as a
function from $\nat$ to $\nat_0$ with $rp_P(n) = p_P(U_n) = p(P,U_n)$; here $n$
is the size (i.e.
the number of requests) of the user.

In what follows we briefly describe how the response performance of a process
$P$ can be calculated 
from its refusal transition system. 
We restrict attention to so-called response processes, which never produce an
$out$ without a 
corresponding preceding $in$.

By Definition~\ref{def:performance}, to determine $rp_P(n)$
we have to consider all $w \in \DL(P \;\|\; U_u)$ that do not contain $\omega$,
count the number of
their full time steps and then take the supremum of the numbers so obtained.
These traces are just
paths in $\RTS(P \;\|\; U_u)$ that do not contain $\omega$ and contain only full
time steps. These
paths can have at most $n$ $in$'s and $n$ $out$'s (due to synchronisation with
$U_n$). But after
the $n$-th $out$, an urgent $\omega$ becomes available and no more full time
steps can occur before
$\omega$; in other words, full time steps are only possible before the $n$-th
$out$. So we have 
solely to consider paths in $\RTS(P \;\|\; U_n)$ that contain only full time
steps and have at most $n$
$in$'s and $(n-1)$ $out$'s (and, hence, no $\omega$).
In~\cite{CV05} it has been proven that for each of these paths there is a
so-called {\rm
$n$-critical path} in $\RRTS(P)$\footnote{This is a reduced version of the
$\RTS(P)$
where all conditional time steps, that cannot participate in a full time step
when $P$ runs in
parallel with a user $U_n$, are removed. For more details see \cite{CV05}.} with
the same number of
time steps. Thus, the following characterisation for the response performance
can be given.

\begin{theorem}\rm\label{theo:rp-characterization}
({\it Characterisation for response performance}) A path in $\RRTS(P)$ is
$n$-critical if it
contains at most $n$ $in$'s, at most $n-1$ $out$'s , and all time steps before
the $n$-th $in$ are
full. The response performance of a process $P$ is the supremum of the numbers
of time steps taken
over all $n$-critical paths.
\end{theorem}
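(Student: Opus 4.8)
The plan is to establish the two inequalities between $rp_P(n)$ and the supremum $s_n$ of the numbers of time steps of the $n$-critical paths of $\RRTS(P)$ (read as $\infty$ when this set of numbers is unbounded). First I would unfold the definitions. By Definition~\ref{def:performance}, $rp_P(n) = p(P,U_n) = \sup\{\zeta(v) \mid v \in \DL(P\|U_n),\ \omega\text{ does not occur in }v\}$, and, as discussed above, each such $v$ comes from a path of $\RTS(P\|U_n)$ all of whose time steps are full and in which $\omega$ does not occur. Since $U_n$ consists of $n$ copies of $\underline{in}.\underline{out}.\underline\omega$ synchronised on $\omega$, such a path has at most $n$ $in$'s; and once the $n$-th $out$ has occurred every copy sits at $\underline\omega$, so by rule $\name{Par}_r$ the composition can no longer refuse $\omega$ and no further full time step is possible. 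Hence we may cut $v$ off after its last full time step and assume it has at most $n-1$ $out$'s, so that $rp_P(n)$ is the supremum of $\zeta(v)$ over precisely these runs.

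For $rp_P(n)\le s_n$ I would project such a run of $P\|U_n$ onto its first component; since every non-$\omega$ move of the composition involves a move of $P$, this yields a path of $\RTS(P)$ with the same number of time steps, and each of its genuine time steps, participating in a full time step of $P\|U_n$, survives in $\RRTS(P)$. It remains to see that the time steps taken before the $n$-th $in$ are full, i.e.\ that the first component has no urgent action there (recall its visible actions are $in$ and $out$): before the $n$-th $in$ a user copy still sits at $\underline{in}$, so $in$ is not refusable by $U_n$ and $P$ must refuse it; and if $P$ had an urgent $out$ then, because $P$ is a response process, there would be an unanswered $in$, hence a user copy at $\underline{out}$, so $out$ would be refusable by neither component, contradicting fullness of the time step of $P\|U_n$. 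Thus the projection is an $n$-critical path, so $\zeta(v)\le s_n$; this is essentially the statement taken from~\cite{CV05}.

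For $s_n\le rp_P(n)$ I would conversely lift any $n$-critical path $\rho$ of $\RRTS(P)$ to a run of $P\|U_n$, letting the $i$-th $in$ of $\rho$ be answered by a fresh copy of the user and the $i$-th $out$ by a copy that has already done its $in$. At each time step of $\rho$: before the $n$-th $in$ the step is full, so $P$ refuses everything while a user copy still at $\underline{in}$ refuses $\omega$, and $P\|U_n$ makes a full time step; after the $n$-th $in$ the step lies in $\RRTS(P)$, so by the construction of the reduced system it still takes part in a full time step of $P$ together with the user state reached here, and since $\rho$ has at most $n-1$ $out$'s a user copy is still off $\underline\omega$ to refuse $\omega$. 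Hence $\omega$ never occurs along the lifted run and, using that $\RTS(P\|U_n)$ is built from $\RTS(P)$ and $\RTS(U_n)$ by the rules of Fig.~\ref{refusal}, we obtain $v\in\DL(P\|U_n)$ without $\omega$ and with $\zeta(v)$ equal to the number of time steps of $\rho$, so $s_n\le rp_P(n)$; together with the first inequality this gives $rp_P(n)=s_n$.

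The delicate point, and the main obstacle, is the bookkeeping around $\RRTS(P)$ in the two middle steps: one has to verify that the side conditions ``at most $n-1$ $out$'s'' and ``all time steps before the $n$-th $in$ full'' are exactly the conditions under which worst-case runs of $P\|U_n$ and $n$-critical paths of $\RRTS(P)$ are in length-preserving correspondence, so that no full time step is lost or created when passing between $\RTS(P\|U_n)$, $\RTS(P)$ and $\RRTS(P)$; this, and the definition of the reduction itself, is where the technical work of~\cite{CV05} is needed.
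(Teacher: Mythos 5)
Your proposal is correct and follows essentially the same route as the paper: the paper itself only gives the informal discussion preceding the theorem (at most $n$ $in$'s, cutting off after the $n$-th $out$ because no further full time step can occur, fullness only required before the $n$-th $in$) and defers the length-preserving correspondence between worst-case runs of $P\|U_n$ and $n$-critical paths of $\RRTS(P)$ to~\cite{CV05}, exactly as you do. Your two-inequality decomposition and the projection/lifting arguments are a faithful elaboration of that sketch, with the genuinely technical part (the definition and properties of the reduced transition system) correctly identified as the content imported from~\cite{CV05}.
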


Now a key observation is that, when the number $n$ of requests is large compared
to the
number of processes in $\RRTS(P)$, an $n$-critical path with many time steps
must contain cycles.
Finding the worst cycles turns out to be essential for performance evaluation.
In~\cite{CV05},
these worst cycles are distinguished to be either {\em catastrophic} or {\em
bad} cycles.

\begin{definition} \rm\label{def:catastrophic} ({\it Catastrophic cycle})
A cycle in $\RRTS(P)$ is a catastrophic cycle if it has a positive number of
time steps but no
$in$'s and no $out$'s. If $\RRTS(P)$ has a catastrophic cycle then $rp_P(n) =
\infty$ for some
$n$.
\end{definition}

Intuitively, once in a catastrophic cycle, we cannot satisfy any other request
(this is because a
catastrophic cycle does not contain $out$-actions) but time can pass
indefinitely (the cycle has at
least one time step). As a consequence, there exists some $n$ (depending on how
many $in$ and
$out$-actions are performed on a path in $\RRTS(P)$ from $P$ to this cycle) such
that $rp_P(n) =
\infty$, i.e.\ some user is not satisfied within a bounded time.
If $\RRTS(P)$ is free from catastrophic cycles we search for the so called bad
cycles:

\begin{definition} \rm\label{def:bad-cycles} ({\it Bad cycle})
For $P$ without catastrophic cycles, we consider cycles reached from $P$ by a
path where all time
steps are full and which themselves contain only full time steps. Let the {\em
average
performance} of such a cycle be the ratio between the number of its full time
steps and the number
of its $in$ actions. A {\em bad cycle} is a cycle in $\RRTS(P)$ which has
maximal average
performance.
\end{definition}

Theorem 16 in~\cite{CV05} shows that $rp_P$ is asymptotically linear, i.e.
$\exists\ a \in \Real$ s.t. $rp_P(n)
= an +  \Theta(1)$, and that the ``asymptotic performance'' $a$ of $P$ is the
 average performance of a bad cycle. In other words, while $n$-critical paths
give the exact value 
of the response performance of a process, the average performance of a bad cycle
is its asymptotic 
behaviour. Both catastrophic and bad cycles can be automatically checked with
\fase.%

\section{Performance evaluation with FASE}\rm\label{sec:algos}
In this section we introduce
$\fase$\footnote{\url{http://cosy.cs.unicam.it/fase/}}, the tool that has been
used to automatically evaluate the worst-case efficiency of the three
buffer implementations discussed in Section~\ref{sec:casestudy}.
\fase\ is written in Java language and consists of two main components.
The former one is essentially a parser unit; it takes as input a
sequence of characters that represents a PAFAS process $P$ and
builds its $\RTS(P)$. The second one is the performance module that implements
the algorithms used to evaluate the worst-case efficiency of $P$. 
The two modules are loosely coupled; they communicate via a shared Java data
structure or via an XML-based representation of the $\RTS$. The last aspect is
very important since changes to a module do not affect the other one; moreover,
the XML interface guarantees a broader interoperability with external tools such
as graph visualisers, which could be useful for further analysis of the modelled
systems.

\subsubsection{Parsing unit}
Fig. \ref{FASEARC} shows on top the parsing phase that is based on two
well-known tools: JFlex \cite{Klein01} as the lexer generator and jacc
\cite{Jones04jacc} as the parser generator. 
JFlex defines how input streams must be arranged into words - called
\textit{tokens} - while jacc pseudocode gives rules - called {\it productions} -
to compound such tokens. These productions are used by the parser to generate
the data structure that contains the hierarchical representation of the process
where each element is a term of the grammar in Definition \ref{PCTIP}. For
example, after parsing $P = a.nil\ +\ \underline{b}.nil$, the hierarchy
structure obtained has on top the process variable $P$ which contains a choice
operator with a prefix $a.nil$ and an urgent prefix $\underline{b}.nil$
respectively, and so on. 
Every element is an instance of a Java class that handles the respective SOS
rules given in Fig. \ref{refusal}; thus, an element encapsulates both functional
and temporal behaviour used to generate  $\RTS(P)$ as indicated at the bottom of
Fig. \ref{FASEARC}.

The building process of $\RTS(P)$ exploits the hierarchical structure,
traversing it from the root element; at each step the operator objects generate
the proper nodes and transitions according to Definition \ref{PREFSOS}. For
instance, $P = a.nil + \underline{b}.nil$ will produce the node $P$ with two
outgoing transition $a$ and $b$ to the same node $nil$; the additional refusal
transition $\{a\}$ to the process $\underline{a}.nil\ +\ \underline{b}.nil$ will
be produced according to rules $\name{Sum}_{r}$, $\name{Pref}_{r1}$ and
$\name{Pref}_{r2}$. The same method will be applied to the remaining nodes as
expected.

Such an architectural structure provides several advantages. The pseudocode of
both lexer and parser are based on common syntaxes (such as regular expressions
and BNF rules) that are extremely smaller than actual Java code, easier to
understand and easier to maintain. Semantics of each operator is coded in a
separate compile unit, hence it can be specified independently and modified in a
second stage, if necessary. 

\begin{figure}
	\centering
	\scriptsize
	\begin{tikzpicture}[scale=1.2, node distance=2cm,auto,swap]
		\begin{scope}[>=latex]
			\draw [rectangle, text centered, rounded corners, top
color = blue!20, shading angle=315, draw=black!80] (0,0) rectangle (2.5,1.2);
			\draw [rectangle, text centered, rounded corners, top
color = violet!50, shading angle=315, draw=black!80] (4.5,-0.1) rectangle
(7,1.3);
			\draw [rectangle, text centered, rounded corners, top
color = pink!80, shading angle=315, draw=black!80] (1.3,0.5) rectangle
(2.3,0.2);
			\draw [rectangle, text centered, rounded corners, top
color = pink!80, shading angle=315, draw=black!80] (1.35,0.45) rectangle
(2.35,0.15);
			\draw [rectangle, text centered, rounded corners, top
color = pink!80, shading angle=315, draw=black!80] (1.4,0.4) rectangle
(2.4,0.1);
			\draw [rectangle, text centered, rounded corners, top
color = green!50, shading angle=315, draw=black!80] (4.7,0.2) rectangle (5.8,1)
;
			\draw [rectangle, text centered, rounded corners, top
color = yellow!50, shading angle=315, draw=black!80] (2.3,-1.75) rectangle
(4.65,-3);
			\draw [rectangle, text centered, rounded corners, top
color = pink!50, shading angle=315, draw=black!80] (2.3,-4) rectangle
(4.65,-5.25);
			\draw[-] [black!80] (0.5,0.9) node {Lexer};  
			\draw[-] [black!80] (6.4,1) node {Parser};  
			\draw[-] [black!80] (1.9,0.265) node {\tiny token}; 
			\draw[-] [black!80] (5.25,0.65) node {Rules};  
			\draw[-] [black!80] (3.5,-2.2) node {usable Java};  
			\draw[-] [black!80] (3.5,-2.5) node {source code};  
			\draw[-] [black!80] (3.5,-4.45) node {runtime};  
			\draw[-] [black!80] (3.5,-4.75) node {code}; 
			\draw[-] [black!80] (1.25,-0.35) node {JFlex}; 
			\draw[-] [black!80] (5.75,-0.35) node {jacc};
			\draw[-] [black!80] (-0.15,-4.5) node {Process $P$};
			\draw[-] [black!80] (6.9,-4.5) node {$\RTS(P)$};
			\draw[<->] [black!80] (2.5,0.6) -- node[above=0.3pt]
{\tiny get\_next\_token()} (4.5,0.6);
			\draw[-] [black!80] (1.25,-0.5) -- (3.475,-1);
			\draw[-] [black!80] (5.75,-0.5) -- (3.475,-1);
			\draw[->] [black!80] (3.475,-3) -> node[right=0.5pt]
{compilation} (3.475,-4);
			\draw[->] [black!80] (3.475,-1) -> node[right=0.5pt]
{internal conversion} (3.475,-1.75);
			\path[edge] [bigBlueEdge] (0.5,-4.4) .. controls
+(down:0.2cm) and +(left:0.5cm) .. (2,-4.6);  
			\path[edge] [bigBlueEdge] (4.9,-4.7) .. controls
+(up:0.2cm) and +(left:0.5cm) .. (6.4,-4.5); 
			\draw[-] [black!80] (-1,1.25) -- node[above=0.5pt,
rotate=90] {\tiny pseudolanguage} (-1,-0.75);
			\draw[-] [black!80] (-1,-1.75) -- node[above=0.5pt,
rotate=90] {\tiny Java language} (-1,-5.3);
		\end{scope}[>=latex]
	\end{tikzpicture}
	\caption{An architectural overview of $\fase$.}\label{FASEARC}
\end{figure}
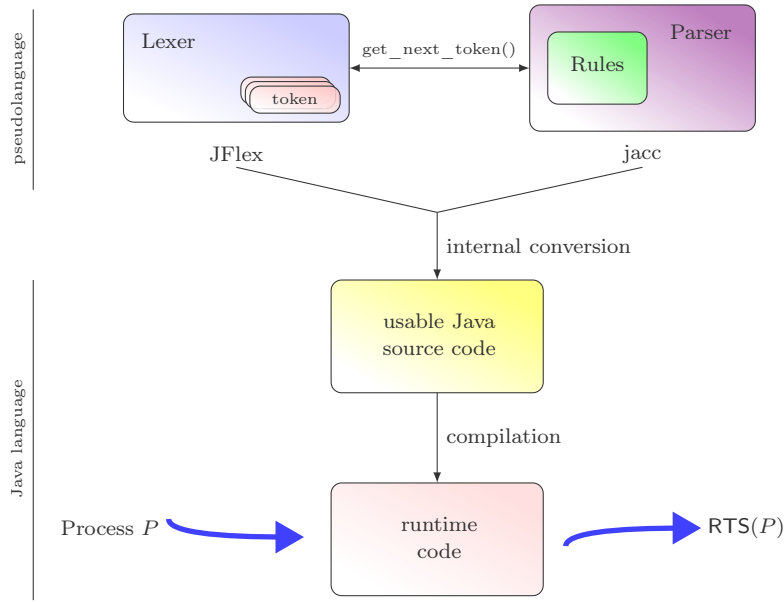

\subsubsection{Performance unit}
The performance component provides all the algorithms needed to evaluate systems
performance according to the theoretical results stated in the previous section.
In particular, \fase\ adopts two new algorithms for catastrophic cycles
detection and bad cycle calculation that improve those proposed in \cite{CV05}.
Moreover, \fase\ is also able to generate the complete set of traces that
characterises the behaviour of the process. Such diagnostic information is
useful to the user since it helps to understand why a modelled system produces
catastrophic cycles or has certain worst-case performance. This feature has
helped us to validate the results on the response performance of the three
buffer implementations discussed in the next section. 

\paragraph{Catastrophic cycles} The problem of finding catastrophic cycles in a
process
$P$ has been solved in~\cite{CV05} in time $\theta(N^{3})$ where $N$ is the
number of
nodes in $\RRTS(P)$. The new algorithm adopted in \fase\ takes advantage of the
well-known problem of finding the {\em Strongly Connected Components} ({\it
SCCs})~\cite{Aho83}.
Since an SCC of a graph is a subgraph that is strongly connected and maximal,
the following suffices. We obtain a new graph $G$ from $\RRTS(P)$ by deleting
all edges labelled $in$ and $out$ and apply the algorithm for finding the SCCs.
If at least one contains some time step, we can conclude that $P$ has a
catastrophic cycle. Indeed, if $S$ is an SCC in $G$ and there is a time step
$(u, v)$ with $u$ and $v$ nodes of $S$, then $S$ has a path from $v$ to $u$,
i.e. it has a catastrophic cycle that is also contained in $\RRTS(P)$. Vice
versa, if $P$ has a catastrophic cycle, it is contained in some SCC of $G$,
which therefore contains a time step.

The standard SCC discovery algorithm has complexity $O(N + E)$ with $N$ and $E$
the number of nodes and edges of $G$ respectively, and the same applies to
construction of $G$ and thus to finding catastrophic cycles in \fase.
Table \ref{tab:cctime}\footnote{$\pipe$ and $\buff$ are two different
implementation of the same buffer discussed in the next section. We have left
out $\fifo$ since its representation is too small for sensible comparison.}
reports the running time for the original and the new algorithm.

\begin{table}
	\centering
	\begin{tabular}{l*{5}{l}}
	$Cells\;$ & $nodes/edges\;$ & $previous\ and\;\;\;$ & $new\ algorithm\;$
& $Gain$\\
	\midrule
	$\pipe_{5}$ & $114/292$ 		& $37$ 			& $16$ 	
	& $74.1 \%$ 	\\
	$\buff_{5}$ & $96/216$ 			& $15$ 			& $15$ 	
	& $-$ 			\\
	$\pipe_{6}$ & $272/759$ 		& $578$ 		& $63$ 	
	& $89.1\ \%$ 	\\
	$\buff_{6}$ & $160/368$ 		& $93$ 			& $22$ 	
	& $76.3\ \%$ 			\\
	$\pipe_{7}$ & $648/1958$ 		& $11484$ 	& $296$ 	
& $ 97.4\ \%$ 	\\
	$\buff_{7}$ & $240/560$ 		& $390$ 		& $47$ 	
	& $ 87.9\ \%$  \\
	
	$\pipe_{8}$ & $1544/5034$ 	& $620109$ 	& $1575$	
& $99.7 \%$ 	\\
	$\buff_{8}$ & $336/792$ 		& $1172$ 		& $70$ 	
	& $94.0\ \%$ 			\\

	$\pipe_{9}$ & $3680/12902$ 	& $-$ 			& $9687$ 	
& $100\ \%$ 		\\
	$\buff_{9}$ & $448/1064$ 		& $2922$ 		& $109$ 
	& $96.2 \%$ 	\\
	\midrule
	\end{tabular}
	\caption{Catastrophic-cycle  detection time (expressed in {\it ms})}
	\label{tab:cctime}
\end{table}

\paragraph{Bad cycles}
Next we look for a bad cycle, possibly not unique, of $\RRTS(P)$ according to
Definition \ref{def:bad-cycles} that gives the average performance of $P$.
To determine this value, a graph $G$ is obtained from $\RRTS(P)$ by deleting all
non-full time steps and all nodes not reachable any more (see Proof of Theorem
17 of \cite{CV05} for more details). To apply a known algorithm from the
literature, we do not look for a cycle with maximal average performance in $G$,
but for one with minimal average throughput, the latter being just the inverse
of the average performance. Such a cycle can be seen as a set of sub-paths where
each one ends in a time step.

For the known algorithm, we must transform $G$ to a graph $G'$ where each edge
is weighted with some cost and represents one time step, i.e.\ an edge
corresponds to such a sub-path. Since the costs should be minimal, the subpath
without the last node must be a shortest path between the respective nodes as
measured by the number of $in$'s.
Hence, one obtains a new graph $G_0$ by deleting all time steps in $G$ and
computes its all-pairs shortest paths matrix $d$ with the Floyd-Warshall
algorithm, considering a weight $1$ for  $in$-transitions and $0$ for all the
other edges. The final $G'$ graph is constructed from the nodes of $G$ on the
basis of the matrix $d$; for every two nodes $u$, $v$ of $G$, where $d(u, v)$ is
finite and there exists a time step from $v$ to $v'$, we add the edge $(u, v')$
with cost $d(u, v)$. This construction can be carried out in time $O(N^3)$. Now
the problem of finding the minimal average throughput $t$ can be solved with
Karp's algorithm \cite{Karp78} applied to graph $G'$.

Although the above method is bounded by a complexity of $O(N^3)$, the
construction of the shortest-paths matrix $d$ has a cost of $\Theta(N^3)$. In a
common scenario where the behaviour of $P$ can be very complex, the computation
of the matrix could be expensive as reported in Table \ref{tab:bctime}. To get
around the problem, we have developed an improved algorithm. Starting from $G$
and $G_0$ as defined above, we reverse the edges of $G_0$ to obtain the graph
$G_{0}^{T}$. Since we are only interested in paths leading to a time step, for
each full time step $(v, v')$ of $G$, we apply Dijkstra's algorithm to
$G_{0}^{T}$ with root
node $v$ and weight $1$ for $in$-transitions, $0$ otherwise as above. Finally,
for each node $u$, such that there exists a path from $v$, we add an edge $(u,
v')$ in $G'$ where the cost is the length of a shortest path from $v$ to $u$.

With this approach, we calculate only those (shortest) paths that lead to time
steps, i.e. only those paths that correspond to edges in $G'$. On the contrary,
in the original algorithm, (shortest) paths between all pairs of nodes are
computed. Since Dijkstra's algorithm runs in time $O(E + N log N) $~\cite{Aho83}
(with $N$ and $E$ the number of nodes and edges respectively), constructing $G'$
takes $O(N (E + N log N))$, but at least the first factor $N$ will be
considerably smaller in practice.
Table \ref{tab:bctime} shows the improvements obtained when considering large
buffer implementations.

\begin{table}
	\centering
	\begin{tabular}{l*{6}{l}}
	$Cells$ & $nodes/edges$ of $G\;$ & $previous\ and\;\;$ & $new\
algorithm$ & $nodes/edges$ of $G'\;$ & $Gain$\\
	\midrule
	$\pipe_{5}$ 	& $114/292$ 			& $546$ 		
	& $62$				& $114/3648$ 			 &
$88.6\ \%$ 			\\
	$\buff_{5}$ 	& $96/216$ 				& $469$ 	
		& $62$ 				& $96/4608$ 			
& $86.7\ \%$ 		\\

	$\pipe_{6}$ 	& $272/759$ 			& $4279$ 		
	& $266$ 			& $272/17408$ 		 & $93.7\ \%$ 	
	\\
	$\buff_{6}$ 	& $160/368$ 			& $1422$ 		
	& $172$ 			& $160/12800$ 		 & $87.9\ \%$ 	
\\

	$\pipe_{7}$ 	& $648/1958$ 			& $15000$ 		
& $1438$ 			& $648/82944$ 		 & $90.4\ \%$ 	
\\
	$\buff_{7}$ 	& $240/560$ 			& $7485$ 		
	& $437$ 			& $240/28800$ 		 & $94.1\ \%$ 	
\\

	$\pipe_{8}$ 	& $1544/5034$			& $-$ 			
	& $6672$	 		& $1544/395264$ 	& $100\ \%$	
\\
	$\buff_{8}$ 	& $336/792$ 			& $12454$ 		
& $734$ 			& $336/56448$ 		 & $94.1\ \%$ 	
\\

	$\pipe_{9}$		& $3680/12648$ 		& $-$			
		& $56000$ 		& $3680/1884160$ 	 & $100\ \%$ 	
\\
	$\buff_{9}$ 	& $448/1064$ 			& $45031$ 		
& $1766$ 		& $448/100352$ 		& $96.0\ \%$ 		
\\
	\midrule
	\end{tabular}
	\caption{Construction time of $G'$ (expressed in {\it ms})}
	\label{tab:bctime}
\end{table}

\section{Evaluating the performance of three bounded buffer
implementations}\label{sec:casestudy}
In this section, we evaluate the worst-case efficiency of three implementations
of a bounded buffer
(of capacity $N+2$, where $N\ge 1$ is a fixed natural number) with \fase. These
implementations have
already been consider in~\cite{CDV01} where their efficiency has been compared
via the faster-than
preorder relation $\sqsupseteq$ defined in~\cite{CVJ02}. In particular, we want
to investigate if
the results stated in~\cite{CDV01} still hold in our quantitative setting with
the 
restricted class of users. The three implementations
are $\fifo$ (a bounded-length first-in-first-out queue), $\pipe$ (a sequence of
one place buffers
connected end to end) and $\buff$ (an array used in a circular fashion).
Unlike~\cite{CDV01}, we
abstract away from the actual values stored in the buffers and assume that the
latter perform, 
as visible actions, either an $in$-action (meaning that a value is saved in a
free cell of 
the buffer) or an $out$-action (meaning that the buffer gives back a value to
the external 
environment). This choice surely does not influence performance as already
discussed 
in~\cite{CV05}, since the operations are data-independent, and it allows us to
reduce
considerably the number of states considered when calculating the response
performance.

The first buffer implementation $\fifo$ shown in Fig. \ref{fif} directly
implements a first-in-first-out queue of capacity
$N+2$. It has no overhead in terms of internal actions and it is purely
sequential. In the
examples, we use names and defining equations (using $\equiv$) to describe
recursive
behaviour.

\begin{definition}[{\it The buffer \fifo}]\label{def:fifo}
We define $\fifo \equiv \fifo(0)$ where, for each $i = 0, \cdots N+2$,
$\fifo(i)$ is defined as follows:

\begin{enumerate}
\item $\fifo(0) \equiv in.\fifo(1)$
\item for $0<i<N+2$ then $\fifo(i) \equiv in.\fifo(i+1) + out.\fifo(i - 1)$
\item $\fifo(N+2) \equiv out.\fifo(N+1)$
\end{enumerate}
\end{definition}

\begin{proposition}\label{prop:fifo}
The asymptotic performance of \fifo\ is $2$ (i.e. $rp_{\fifo}(n) = 2
n + \Theta(1)$). Moreover, for any $N\geq 1$, $rp_{\fifo}(n) = 2n$.
\end{proposition}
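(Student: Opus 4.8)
The plan is to apply the characterisation of Theorem~\ref{theo:rp-characterization}: first describe $\RRTS(\fifo)$, then bound from above the number of time steps on an arbitrary $n$-critical path, and finally exhibit an $n$-critical path meeting that bound. Since $\fifo$ is purely sequential and contains no $\tau$, its reachable states are exactly the rested terms $\fifo(i)$ and their unique ``aged'' forms reached by one time step: $\underline{in}.\fifo(1)$ for $i=0$, $\underline{in}.\fifo(i+1)+\underline{out}.\fifo(i-1)$ for $0<i<N+2$, and $\underline{out}.\fifo(N+1)$ for $i=N+2$. A rested $\fifo(i)$ has a single full time step, to its aged form, together with the action edges $\nar{in}{r}\fifo(i+1)$ and/or $\nar{out}{r}\fifo(i-1)$, while an aged state has only those same action edges (back to rested states). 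I would check that the conditional self-loop of an aged state is pruned in $\RRTS(\fifo)$: a full time step of $\fifo\|U_n$ forces the $\fifo$-component to refuse every action except possibly $\omega$, which only a rested state can do; at an aged state the still-enabled urgent action ($in$ and/or $out$) would have to be refused by $U_n$, which drives all of its copies past the corresponding prefixes into $\underline{\omega}$, making $\omega$ urgent and hence un-refusable in $U_n$ — contradicting the need for $\omega$ in the composed refusal.

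With this picture, on any path in $\RRTS(\fifo)$ every time step is immediately followed by an action edge or is the last edge of the path, and distinct time steps are followed by distinct actions; hence the number of time steps never exceeds the number of actions by more than $1$. An $n$-critical path has at most $n$ $in$'s and $n-1$ $out$'s, i.e.\ at most $2n-1$ actions, so at most $2n$ time steps, and therefore $rp_{\fifo}(n)\le 2n$.

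For the matching lower bound I would take the $n$-critical path that never leaves levels $0$ and $1$: from $\fifo(0)$, alternate a full time step with an $in$ (reaching $\fifo(1)$) and a full time step with an $out$ (reaching $\fifo(0)$); after $n-1$ such pairs (each consisting of an $in$ and an $out$) append one more full time step followed by an $in$, and a final full time step at $\fifo(1)$. This path performs exactly $n$ $in$'s, $n-1$ $out$'s, only full time steps, and $2n$ time steps, and it is a legal path for every $N\ge 1$. Hence $rp_{\fifo}(n)\ge 2n$, so $rp_{\fifo}(n)=2n$ for all $n\ge 1$; in particular $rp_{\fifo}(n)=2n+\Theta(1)$, so the asymptotic performance is $2$. (The value $2$ can alternatively be obtained from Theorem~16 of~\cite{CV05}: the cycle $\fifo(0)\nar{1}{}\underline{in}.\fifo(1)\nar{in}{r}\fifo(1)\nar{1}{}(\underline{in}.\fifo(2)+\underline{out}.\fifo(0))\nar{out}{r}\fifo(0)$ has two full time steps per $in$, and by the same ``followed-by-an-action'' argument no cycle can exceed average performance $2$, so this is a bad cycle.) The main obstacle I anticipate is pinning down $\RRTS(\fifo)$ precisely — specifically, verifying that the conditional self-loops at the aged states disappear via the $\omega$-urgency argument above; once that is settled, both bounds reduce to the short counting arguments just sketched.
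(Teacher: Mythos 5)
Your proof is correct and follows the paper's approach: it uses the $n$-critical-path characterisation of Theorem~\ref{theo:rp-characterization} and exhibits exactly the paper's witness path $(\A \, in \, \A \, out)^{n-1}\, \A \, in \, \A$ for the lower bound. The only difference is that you supply in full what the paper delegates to \fase\ or merely asserts --- namely the explicit description of $\RRTS(\fifo)$ (including the $\omega$-urgency argument that prunes the conditional self-loops at the aged states) and the counting argument that every time step is immediately followed by an $in$ or $out$, which gives the matching upper bound $rp_{\fifo}(n)\le 2n$; both of these check out (for the parenthetical bad-cycle remark one should also note that any cycle in $\RRTS(\fifo)$ has equally many $in$'s and $out$'s, so the bound $\#\text{time steps}\le\#\text{actions}$ indeed yields average performance at most $2$).
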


\begin{proof}
We have used \fase\ in order to automatically prove that \fifo\ does
not have catastrophic cycles and to calculate its asymptotic
performance. For what concerns its response performance, we can easily see that
\fifo\ may need a
time step for any input or output. E.g.\ $(\A \, in \, \A \, out)^{n-1} \, \A \,
in \, \A$ is an $n$-critical
path with a maximum number of time steps. We can conclude that
$rp_{\fifo}(n)=2n$.
\end{proof}

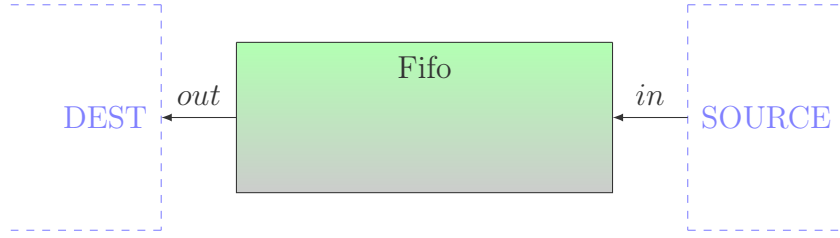
\begin{figure}
	\centering
	\begin{tikzpicture}[only marks, scale=1]
	\begin{scope}[>=latex]
	   \draw[-] [blue!50, dashed] (0,0) -- (2,0);
	   \draw[-] [blue!50, dashed] (2,0) -- node[left=1pt] {DEST} (2,-3);
	   \draw[-] [blue!50, dashed] (2,-3) -- (0,-3);
	   \draw[-] [top color = green!30, bottom color = black!20,
draw=black!80](8,-0.5) -- (8,-2.5) -- (3,-2.5) -- (3,-0.5);	
		 \draw[-] [black!80] (3,-0.5) -- node[below=1pt] {Fifo}
(8,-0.5);
	   \draw[-] [blue!50, dashed] (11,0) -- (9,0);
	   \draw[-] [blue!50, dashed] (9,0) -- node[right=1pt] {SOURCE} (9,-3);
	   \draw[-] [blue!50, dashed] (9,-3) -- (11,-3);
	   \draw[->] [black!80] (3,-1.5) -> node[above=0.5pt] {$out$} (2,-1.5);
	   \draw[->] [black!80] (9,-1.5) -> node[above=0.5pt] {$in$} (8,-1.5);
	\end{scope}[>=latex]   
	\end{tikzpicture}
	\caption{The Software Architecture for $\fifo$}\label{fif}
\end{figure}

A buffer can also be implemented as a concatenation of $N+2$ cells as shown in
Fig. \ref{pip}, 
where a cell is an input/output device
that contains at most one value. In such a case, the cells have to be connected 
end-to-end, so that the output of each cell becomes the input of the next one.

\begin{definition}[{\it The buffer \pipe}]\label{def:pipe}
We define an {\em empty cell} as the process $C \equiv in. C'$ where $C'\equiv
out.C$. Let $i=0,
\cdots, N+1$; the $i$-th cell of \pipe\ is defined by $C_i \equiv C[\Phi_i]$
where the relabelling
function $\Phi_i$ is defined as follows:
$$
\Phi _{i}(\alpha )=
\begin{cases}
\begin{array}{ll}
\delta_{i}   & \mbox{if } \alpha = in \mbox{ and } 0 \leq i \leq N\\
\delta_{i-1} & \mbox{if } \alpha= out \mbox{ and } 1 \leq i \leq N+1\\
\alpha  & \mbox{otherwise}\\
\end{array}
\end{cases}
$$
Here each action $\delta_j$ passes the value from the $(j + 1)$-th to the $j$-th
cell. We force synchronisation among two consecutive cells by properly
relabelling $in$ and
$out$-actions of single cells. Let $A = \{\delta_0, \delta_1, \cdots,
\delta_{N+1}\}$. We define
$\pipe \equiv (C_0 \parallel_{\delta_0} C_1 \parallel_{\delta_1} \ldots
\parallel_{\delta_{N+1}}
C_{N+1} )/A$ where, for any given $P \in \Pp$, the process $P/A$ is the same as
$P [\Phi_{A}]$ where
$\Phi_{A}(\alpha) = \tau$ if $\alpha \in A$ and $\Phi_{A}(\alpha )=\alpha$ if
$\alpha \notin A$.
\end{definition}

Besides input and output of values, \pipe\ performs a number of activities in
order to manage the
queue of cells, i.e. to move values from  a cell to the next one. These actions
are
synchronisations between consecutive cells on actions $\delta_i$ and have been
made internal.
Moreover, note that $\pipe$ receives input values in cell $N+1$ (the only
$in$-action not renamed by
functions $\Phi_i$ is the one performed by this cell) and delivers output values
at cell 0.

\begin{proposition}\label{prop:pipe}
The asymptotic performance of \pipe\ is $2$. Moreover, for any $N \geq 1$, we
have that
$rp_{\pipe}(n) = 2 n + (N + 1)$.
\end{proposition}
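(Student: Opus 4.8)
The plan is to treat the two claims separately, using the characterisations already recalled. For the asymptotic performance I would proceed exactly as in the proof of Proposition~\ref{prop:fifo}: run $\fase$ on $\pipe$ to verify that $\RRTS(\pipe)$ contains no catastrophic cycle --- so that, by Theorem~16 of~\cite{CV05}, $rp_{\pipe}$ is asymptotically linear --- and to compute the average performance of a bad cycle, which should come out to be $2$. The witnessing bad cycle is a ``steady-state'' loop in which, per buffered request, exactly one full time step precedes the matching $in$ and one precedes the matching $out$, all the internal $\delta_i$-transfers being carried out during the intervening idle period; since $\fase$ also exports the relevant traces, I would display such a cycle explicitly as a sanity check.

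For the exact value I would use Theorem~\ref{theo:rp-characterization}, so it suffices to bound the number of time steps of $n$-critical paths of $\RRTS(\pipe)$ from both sides. \emph{Lower bound:} exhibit one explicit $n$-critical path with $2n+(N+1)$ time steps. Starting from the empty pipe, take the one full time step it admits, perform the first $in$ (so cell $N+1$ holds a value), and then repeat $N+1$ times the block ``one time step, then the unique enabled transfer $\delta_i$'', driving that value down to cell $0$; this already uses $1+(N+1)=N+2$ time steps with a single $in$. From there $\pipe$ behaves essentially like $\fifo$: the remaining $n-1$ $in$'s, the $n-1$ $out$'s and the intervening $\delta_i$-transfers are scheduled so that each remaining $in$ and each $out$ is preceded by one further time step, with a last time step taken just before the $n$-th $out$ would be forced (that $n$-th $out$ being disallowed on an $n$-critical path). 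The count is $(N+2)+\bigl(2(n-1)+1\bigr)=2n+(N+1)$. As for $\fifo$ I would give this path schematically rather than listing every intermediate state, checking only that each ``$\fifo$-like'' time step genuinely survives in $\RRTS(\pipe)$, i.e.\ participates in a full time step of $\pipe\|U_n$ --- which holds, by patience, whenever cell $0$ is not simultaneously demanding an $out$ together with $U_n$.

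\emph{Upper bound:} show no $n$-critical path beats $2n+(N+1)$. The structural facts to lean on are: (i) immediately after any time step every cell is urgent, so any enabled internal transfer is forced and a maximal cascade of transfers runs before the next time step; (ii) after that cascade the only obstruction to a surviving time step is an urgent $in$ at cell $N+1$ or an urgent $out$ at cell $0$, each of which must then be discharged; and (iii) whenever cell $0$ becomes urgent while holding a value, $U_n$ is necessarily also demanding an $out$ --- the number of values in the pipe equals the number of $U_n$'s outstanding $out$'s --- so this forced $out$ is charged against the budget of $n-1$. Combining (i)--(iii) one charges at most $N+3$ time steps to the initial ``pipe-filling'' phase (which is exactly the $n=1$ bound: one before the first $in$, $N+1$ for the first value's descent, one before its $out$ is forced) and at most two further time steps to each subsequent $out$, i.e.\ to each subsequent $in$, giving $N+3+2(n-1)=2n+(N+1)$. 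The main obstacle is making this charging rigorous: one has to track a potential on configurations --- e.g.\ the number of occupied cells together with the sum of their distances to cell $0$ --- to show both that every time step makes definite progress along such a descent and that the ``pipe-filling'' bonus cannot be collected more than once, all while accounting for the hiding (an internal transfer blocks time only when \emph{both} its partners are urgent) and for the $\RRTS$-reduction relative to $U_n$. Everything else --- the catastrophic-cycle check, the asymptotic coefficient, and the explicit path --- is routine given $\fase$ and the characterisations above.
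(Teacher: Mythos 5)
Your overall strategy matches the paper's: $\fase$ for the catastrophic-cycle check and the asymptotic coefficient $2$, plus a hand analysis of worst-case $n$-critical paths for the exact value (the paper itself only sketches this and defers the formal treatment to \cite{COVO07}, so your admitted informality in the upper bound is not by itself a defect). However, your lower-bound witness path is not actually an $n$-critical path, and this is a genuine gap. You propose to drive the first value all the way down to cell $0$ (one time step per $\delta_i$-transfer) \emph{before} admitting any further $in$. But after the transfer $\delta_N$, cell $N+1$ reverts to $in.C'[\Phi_{N+1}]$, and one time step later its $in$ is urgent; since every pending request component of $U_n$ is $\underline{in}.\underline{out}.\underline{\omega}$ and hence cannot refuse $in$ (rule $\name{Pref}_{r2}$ together with $\name{Par}_r$), no subsequent time step can refuse $in$, so it can be neither full nor survive in $\RRTS(\pipe)$. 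For $N\geq 1$ and $n\geq 2$ your schedule therefore already breaks at the third transfer block, violating the requirement of Theorem~\ref{theo:rp-characterization} that all time steps before the $n$-th $in$ be full. Notably, your own upper-bound observation (ii) records exactly this obstruction (an urgent $in$ at cell $N+1$ blocks a surviving time step), so the lower-bound construction contradicts it.

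The correct worst-case path is the pipelined one the paper describes: the $k$-th value is forced into cell $N+1$ after $2k-1$ time steps (as soon as the cell has emptied and aged), and each value then advances one cell per time step, so the first $out$ occurs after $N+3$ time steps and each subsequent $out$ two time steps later; cutting the path just before the forced $n$-th $out$ yields $N+3+2(n-1)=2n+(N+1)$ time steps. Your arithmetic happens to land on the same number, but the path realising it is different, and "from there $\pipe$ behaves essentially like $\fifo$" is misleading: the inputs cannot be held back, they are forced in and overlap with the transfers. Repair the lower bound accordingly; the rest of your proposal is in line with the paper.
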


\noindent {\em Sketch of the proof}:
Again, we have used \fase\ to prove that $\RRTS(\pipe)$ does not contain
catastrophic cycles and to
evaluate \pipe's asymptotic performance.
We only indicate why $rp_{\pipe}(n) = 2 n + (N + 1)$. The first value is moved
to cell $N+1$ after
one time step; with every further time step, it is moved to the next cell; so it
arrives in cell 0
after $N+2$ time steps and is delivered with the next one. After the second time
step, cell $N+1$
becomes empty, so the second value is put into cell $N+1$ after three time steps
and
then moves along the pipe with the same speed as the first one. Thus, the next
value is always delivered after two more time steps; see \cite{COVO07} for the
formal treatment of a more general case.\hfill $\Box$ \vskip.2cm

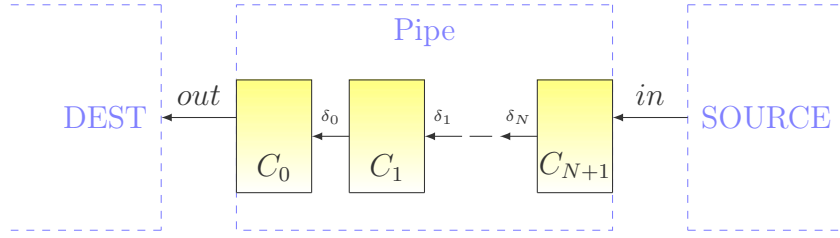
\begin{figure}
	\centering
\begin{tikzpicture}[only marks, scale=1]
\begin{scope}[>=latex]
   \draw[-] [blue!50, dashed] (0,0) -- (2,0);
   \draw[-] [blue!50, dashed] (2,0) -- node[left=1pt] {DEST} (2,-3);
   \draw[-] [blue!50, dashed] (2,-3) -- (0,-3);
   \draw[-] [blue!50, dashed] (8,0) -- (8,-3) -- (3,-3) -- (3,0);	
   \draw[-] [blue!50, dashed] (3,0) -- node[below=1pt] {Pipe} (8,0);   
   \draw[-] [top color = yellow!50, draw=black!80](3,-2.5) -- (3,-1) -- (4,-1)
-- (4,-2.5);	
   \draw[-] [black!80] (4,-2.5) -- node[above=0.5pt] {$C_0$} (3,-2.5);
   \draw[-] [top color = yellow!50, draw=black!80](4.5,-2.5) -- (4.5,-1) --
(5.5,-1) -- (5.5,-2.5);	
   \draw[-] [black!80](5.5,-2.5) -- node[above=0.5pt] {$C_1$} (4.5,-2.5);
   \draw[-] [top color = yellow!50, draw=black!80](7,-2.5) -- (7,-1) -- (8,-1)
-- (8,-2.5);	
   \draw[-] [black!80] (8,-2.5) -- node[above=0.5pt] {$C_{N + 1}$} (7,-2.5);   
   \draw[-] [blue!50, dashed] (11,0) -- (9,0);
   \draw[-] [blue!50, dashed] (9,0) -- node[right=1pt] {SOURCE} (9,-3);
   \draw[-] [blue!50, dashed] (9,-3) -- (11,-3);
   \draw[->] [black!80] (3,-1.5) -> node[above=0.5pt] {$out$} (2,-1.5);
	 \draw[->] [black!80] (4.5,-1.75) -> node[above=0.5pt, font=\tiny]
{$\delta_0$} (4,-1.75);
	 \draw[->] [black!80] (6,-1.75) -> node[above=0.5pt, font=\tiny]
{$\delta_1$} (5.5,-1.75);
	 \draw[-] [black!80] (6.1,-1.75) -- (6.4,-1.75);
	 \draw[->] [black!80] (7,-1.75) -> node[above=0.5pt, font=\tiny]
{$\delta_N$} (6.5,-1.75);
   \draw[->] [black!80] (9,-1.5) -> node[above=0.5pt] {$in$} (8,-1.5);
\end{scope}[>=latex]
\end{tikzpicture}
\caption{The Software Architecture for $\pipe$}\label{pip}
\end{figure}

In Fig. \ref{buf} it has been assumed that $N$ cells are not connected
end-to-end but are used as a storage. These cells
interact with a centralised buffer controller which can store two more values
and uses the cells in
the storage as a circular queue (ordered as $0 < 1< ... < N-1$). In this case,
it is the buffer
controller that interacts with the external environment. More in detail, the
buffer controller
accepts a value from the external environment and then writes it in the first
empty cell. It also
reads the oldest undelivered value from the array and outputs it whenever
possible. 
In the following we write $a
\oplus b$ to denote $(a+b)\mbox{mod } N$.

\begin{definition}[{\it The buffer \buff}]
Let $i= 0 \ldots N-1$. The $i$-the cell of the storage is described by the
process $B_i \equiv
C[\Phi'_i]$ where the relabelling functions $\Phi'_i$ are defined by

$$ \Phi' _{i}(\alpha)=
\begin{cases}
\begin{array}{ll}
\omega_{i} &  \mbox{if }  \alpha= in\\
\rho_{i}   & \mbox{if }  \alpha= out\\
\alpha     & \mbox{otherwise}\\
\end{array}
\end{cases}$$

Here we use the action $\omega_i$ ($\rho_i$) to denote the writing of a value
into the storage (the
reading of a value from the storage, respectively). Let $B = \{\omega_j ,\rho_j
\,|\, i=0,\ldots,
N\}$ be the set of all these actions and $\mem \equiv ( B_0
\parallel_{\emptyset} \ldots
\parallel_{\emptyset} B_{N-1})$.

The state of the buffer controller, $\bc(x,y,i,m)$, is determined by four
arguments: $x, y \in V=
\{\perp,\square\}$ are used to represent the absence or presence of an input
value (output value
resp.) (see below) stored in $\bc$, $i$ is the index of the cell that contains
the oldest
undelivered value and $m$ is the number of values currently stored in $\mem$.
If $x =\perp$ the buffer controller can accept a new value, otherwise (i.e.\ if
$x=\square$) it has
to wait until the last accepted value is actually stored in $\mem$. Analogously,
if $y=\square$,
then the buffer controller is ready to produce an output and if $y=\perp$ no
value is available for
immediate output.
Let $x, y \in V$, $0\leq i \leq N-1$ and $0\leq m\leq N$. We define:
\begin{enumerate}
\item $\bc(\perp, \perp, i,0) \equiv in.\bc(\square, \perp, i, 0)$;
\item $m >0$ implies $\bc(\perp, \perp, i,m) \equiv
in.\bc(\square, \perp, i,m) + \rho_i.\bc(\perp, \square, i\oplus 1, m-1)$;
\item $\bc(\square, \perp, i,0) \equiv \omega_i.\bc(\perp,\perp, i,1)$;
\item  $0<m<N$ implies $\bc(\square, \perp, i,m) \equiv
\omega_{i\oplus m}.\bc(\perp, \perp, i, m+1) +
\rho_{i}.\bc(\square,\square, i\oplus 1, m-1)$;
\item   $\bc(\square, \perp, i,N)
\equiv \rho_{i}.\bc(\square,\square, i\oplus 1, N-1)$;
\item   $\bc(\perp,\square, i,m)
\equiv in.\bc(\square,\square,i,m) + out.\bc(\perp,\perp,i,m)$;
\item  $m<N$ implies $\bc(\square, \square, i,m) \equiv
\omega_{i\oplus m}.\bc(\perp, \square, i, m+1) +
out.\bc(\square,\perp, i, m)$;
\item  $\bc(\square, \square, i,N)
\equiv out.\bc(\square,\perp, i, N)$.
\end{enumerate}

We finally define $\buff \equiv (\mem \,\|_B \, \bc(\perp, \perp, 0,0))/B$.
Notice that in such a
case all the actions we use to read and write values in $\mem$ are made
internal.
\end{definition}

\begin{proposition}\label{prop:buff}
For any $N\geq 1$, we have $rp_{\buff}(n) = 4n$.
\end{proposition}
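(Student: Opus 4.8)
The plan is to proceed exactly as for Propositions~\ref{prop:fifo} and~\ref{prop:pipe}: first use \fase\ to verify that $\RRTS(\buff)$ contains no catastrophic cycle and that the average performance of a bad cycle is $4$, and then establish the exact equality $rp_{\buff}(n)=4n$ by bounding, in both directions, the number of time steps of $n$-critical paths (Theorem~\ref{theo:rp-characterization}). For the absence of catastrophic cycles one can also argue directly: the only internal actions of $\buff$ are the writes $\omega_i$ and the reads $\rho_i$; a write can occur only right after a fresh $in$, since $\omega_i$ is the unique action that resets the first component of $\bc(\cdot,\cdot,\cdot,\cdot)$ from $\square$ to $\perp$, and symmetrically a read feeds the output side whose only way out is an $out$; hence no cycle carrying a time step can avoid both $in$ and $out$ (Definition~\ref{def:catastrophic}).

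For the lower bound I would exhibit one concrete $n$-critical path with $4n$ time steps in which the $n$ requested values are processed strictly one after another. For each value the controller performs, in this order, the visible $in$, an internal write $\omega_j$ into a memory cell, an internal read $\rho_{j'}$ from that cell, and the visible $out$; each of these four actions is non-urgent at the moment it becomes enabled, so a full time step can be inserted immediately before it --- recall that a full time step is blocked only by an urgent $\tau$, i.e.\ an urgent $\omega$ or $\rho$, and in this schedule every internal action is fired as soon as it turns urgent. Carrying this out for the first $n-1$ values gives a prefix with $4(n-1)$ full time steps, $n-1$ $in$'s and $n-1$ $out$'s; for the $n$-th value I would append the fragment ``full time step, $in$, full time step, $\omega_j$, full time step, $\rho_{j'}$, full time step'' and stop before its (now superfluous) $out$. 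The resulting path has $4n$ full time steps, $n$ $in$'s and $n-1$ $out$'s and all its time steps are full, so it is $n$-critical; hence $rp_{\buff}(n)\ge 4n$.

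For the matching upper bound I would show that every value entering $\buff$ accounts for at most four full time steps of any $n$-critical path. The key observation is that the centralised controller $\bc$ serialises, for each value, the events $in$, $\omega$, $\rho$, $out$: the flag $x$ is raised by $in$ and lowered only by $\omega$, the flag $y$ is raised by $\rho$ and lowered only by $out$, and in each of the eight defining equations these actions sit under a single non-urgent prefix; consequently at most one full time step can elapse between two consecutive such events (after it the next action is urgent), at most one precedes the first $in$ of a value, and at most one follows its $\rho$ before the pending $out$. Since the memory $\mem$ is a synchronisation-free product of one-place cells it adds no further internal bottleneck. An $n$-critical path has at most $n$ $in$'s, so at most $n$ values enter, giving at most $4n$ full time steps; a short case check of the two boundary values (the last value need not be output, but $\bc$ still offers a non-urgent $out$ after its $\rho$, so that fourth time step is still available and accounted for) shows the bound $4n$ is actually attained, i.e.\ $rp_{\buff}(n)=4n$.

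The main obstacle is the ``at most four per value'' claim. Making it precise requires walking through the eight equations defining $\bc(x,y,i,m)$ to confirm that no reachable configuration of $\buff\,\|\,U_n$ lets two full time steps pass between consecutive controller events, and that the independence of the memory cells cannot be exploited to pile up extra idling --- this is routine but tedious, just as the corresponding ``maximum number of time steps'' claim was left implicit in Proposition~\ref{prop:fifo}. The \fase\ computation of the bad cycle independently fixes the coefficient at $4$, and the diagnostic traces the tool outputs coincide with the $n$-critical path described above, which is how the result was validated.
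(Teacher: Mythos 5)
Your proposal is correct and follows essentially the same route as the paper: \fase\ certifies the absence of catastrophic cycles and the asymptotic coefficient, an explicit $n$-critical path with $4n$ full time steps gives the lower bound, and the observation that $\bc$ serialises the four events $in$, $\omega$, $\rho$, $out$ for each value (each costing at most one full time step in the worst case) gives the upper bound. The only cosmetic difference is your witness path, which processes the $n$ values strictly one at a time, whereas the paper's fills and drains the storage in blocks of $N$; both yield $4n$ time steps.
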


\begin{proof}
Also in this case we have used \fase\ to prove that $\RRTS(\buff)$ does not have
catastrophic cycles
and to evaluate its asymptotic performance.
Concerning its response performance, consider first the case of one value: after
a time step, the
value is taken into the input part of $\bc$; after another time step, it is
moved into $\mem$; after
the third time step it is moved into the output part of $\bc$; after the fourth
time step, it is
delivered. For several values, these sequences can be interleaved to some
degree; but since $\bc$
takes part in each action, all these actions are performed sequentially, and
always after a time
step in the worst case.
E.g. for $n=kN + m$ for some $k \geq 1$ and $m < N$, 
first we fill up and clear the buffer with the sequence $((\A \, in \A \,
\tau)^{N}\, (\A \,\tau \A \, out)^{N})^{k}$, 
fill it up again with a sequence $(\A \, in \A \, \tau)^{m}$ and finally empty
it with the sequence $(\A \, \tau \A \, out)^{m-1} \, \A \, \tau \, \A $. 
All paths in that form (up to permutations) are $n$-critical paths with the
maximum number of time steps that is  $4Nk + 2m + 2(m - 1) + 2 = 4n$.
\end{proof} 

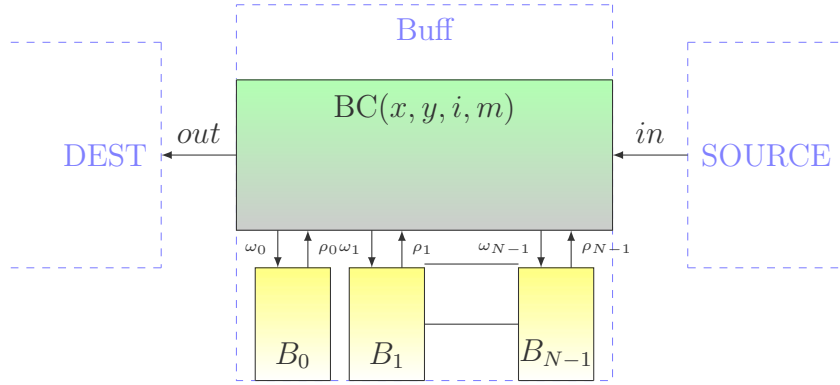
\begin{figure}
	\centering
	\begin{tikzpicture}[only marks, scale=1]
		\begin{scope}[>=latex]
		   \draw[-] [blue!50, dashed] (0,0) -- (2,0);
		   \draw[-] [blue!50, dashed] (2,0) -- node[left=1pt] {DEST}
(2,-3);
		   \draw[-] [blue!50, dashed] (2,-3) -- (0,-3);
		   \draw[-] [blue!50, dashed] (8,0.5) -- (8,-4.5) -- (3,-4.5) --
(3,0.5);	
		   \draw[-] [blue!50, dashed] (3,0.5) -- node[below=1pt] {Buff}
(8,0.5);  
		   \draw[-] [top color = green!30, bottom color = black!20,
draw=black!80](8,-0.5) -- (8,-2.5) -- (3,-2.5) -- (3,-0.5);	
		   \draw[-] [black!80] (3,-0.5) -- node[below=1pt]
{BC($x,y,i,m$)} (8,-0.5); 
		   \draw[-] [top color = yellow!50, draw=black!80](3.25,-4.5) --
(3.25,-3) -- (4.25,-3) -- (4.25,-4.5);	
		   \draw[-] [black!80] (4.25,-4.5) -- node[above=0.5pt] {$B_0$}
(3.25,-4.5);
		   \draw[-] [top color = yellow!50, draw=black!80](4.5,-4.5) --
(4.5,-3) -- (5.5,-3) -- (5.5,-4.5);	
		   \draw[-] [black!80] (5.5,-4.5) -- node[above=0.5pt] {$B_1$}
(4.5,-4.5);
		   \draw[-] [top color = yellow!50, draw=black!80](6.75,-4.5) --
(6.75,-3) -- (7.75,-3) -- (7.75,-4.5);	
		   \draw[-] [black!80] (7.75,-4.5) -- node[above=0.5pt] {$B_{N -
1}$} (6.75,-4.5);      
		   \draw[-] [blue!50, dashed] (11,0) -- (9,0);
		   \draw[-] [blue!50, dashed] (9,0) -- node[right=1pt] {SOURCE}
(9,-3);
		   \draw[-] [blue!50, dashed] (9,-3) -- (11,-3);
			 \draw[->] [black!80] (3.55,-2.5) -> node[left=-0.2pt,
font=\tiny] {$\omega_0$} (3.55,-3);
			 \draw[->] [black!80] (3.95,-3) -> node[right=-0.2pt,
font=\tiny] {$\rho_0$} (3.95,-2.5);
			 \draw[->] [black!80] (4.8,-2.5) -> node[left=-0.2pt,
font=\tiny] {$\omega_1$} (4.8,-3);
			 \draw[->] [black!80] (5.2,-3) -> node[right=-0.2pt,
font=\tiny] {$\rho_1$} (5.2,-2.5);
			 \draw[->] [black!80] (7.05,-2.5) -> node[left=-0.2pt,
font=\tiny] {$\omega_{N-1}$} (7.05,-3);
			 \draw[->] [black!80] (7.45,-3) -> node[right=-0.2pt,
font=\tiny] {$\rho_{N-1}$} (7.45,-2.5);
			 \draw[-] [black!80] (5.5,-3.75) -- (6.75,-3.75);
			 \draw[-] [black!80] (5.5,-2.95) -- (6.75,-2.95);
		   \draw[->] [black!80] (9,-1.5) -> node[above=0.5pt] {$in$}
(8,-1.5);
		   \draw[->] [black!80] (3,-1.5) -> node[above=0.5pt] {$out$}
(2,-1.5);   
		\end{scope}[>=latex]   
	\end{tikzpicture}
	\caption{The Software Architecture for $\buff$}\label{buf}
\end{figure}

Now we can state the main result of this paper. This follows as a
straightforward consequence of
Propositions~\ref{prop:fifo}, \ref{prop:pipe} and~\ref{prop:buff}.

\begin{corollary}
For any $N\geq 1$ , \fifo\ is more efficient than  both \pipe\ and \buff
(w.r.t. the quantitative point of view). Moreover, \buff\ is more
efficient than \pipe\ iff $n \leq \lfloor \dfrac{N+1}{2} \rfloor$.
\end{corollary}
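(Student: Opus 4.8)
Since the corollary concerns the restricted user class ${\cal U} = \{U_n\}$, I would read ``$P$ is more efficient than $Q$'' as the statement $rp_P(n) \le rp_Q(n)$ for the relevant $n$, i.e.\ the preorder $\sqsupseteq$ instantiated on the users $U_n$. The plan is then just to substitute the three exact closed forms established in Propositions~\ref{prop:fifo}, \ref{prop:pipe} and~\ref{prop:buff}, namely $rp_{\fifo}(n) = 2n$, $rp_{\pipe}(n) = 2n+(N+1)$ and $rp_{\buff}(n) = 4n$, and to compare them arithmetically.

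First I would dispatch the claims about \fifo. From $N \ge 1$ we get $N+1 \ge 2 > 0$, so $rp_{\fifo}(n) = 2n < 2n+(N+1) = rp_{\pipe}(n)$ for every $n$, and $rp_{\fifo}(n) = 2n < 4n = rp_{\buff}(n)$ for every $n \ge 1$. Hence \fifo\ is more efficient than (indeed strictly faster than) both \pipe\ and \buff, which is the first assertion.

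Next I would treat the comparison between \buff\ and \pipe. Here $rp_{\buff}(n) \le rp_{\pipe}(n)$ is equivalent to $4n \le 2n+(N+1)$, i.e.\ to $2n \le N+1$. Since $n$ is a positive integer and $2n$ is even, $2n \le N+1$ holds exactly when $2n \le 2\lfloor (N+1)/2 \rfloor$, i.e.\ when $n \le \lfloor (N+1)/2 \rfloor$ --- a one-line check splitting on the parity of $N+1$. This yields the claimed threshold, and combined with the previous paragraph it proves the corollary.

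I do not foresee any real obstacle: the corollary is by design a ``straightforward consequence'' of the three propositions, and all three formulas are exact (not merely asymptotic), so there is no $\Theta(1)$ error term to track. The only step deserving a sentence of care is the rounding argument converting $2n \le N+1$ into $n \le \lfloor (N+1)/2 \rfloor$; everything substantive --- ruling out catastrophic cycles, computing the asymptotic performances with \fase, and confirming the exact formulas via the explicit $n$-critical paths --- has already been carried out in Propositions~\ref{prop:fifo}--\ref{prop:buff}.
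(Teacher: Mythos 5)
Your proposal is correct and matches the paper's intent exactly: the paper gives no explicit proof, stating only that the corollary ``follows as a straightforward consequence'' of Propositions~\ref{prop:fifo}--\ref{prop:buff}, and your substitution of the exact formulas $2n$, $2n+(N+1)$ and $4n$ followed by the arithmetic comparison is precisely that consequence. The rounding step $2n \le N+1 \iff n \le \lfloor (N+1)/2 \rfloor$ is handled correctly, and your non-strict reading of ``more efficient'' (as $rp_P(n) \le rp_Q(n)$, matching the paper's preorder $\sqsupseteq$) is the one that yields the stated threshold.
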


\section{Concluding remarks}
\label{sec:concluding}
The results obtained with our tool are quite different from those presented
in~\cite{CDV01} where
the same buffer implementations have been compared using the efficiency preorder
defined
in~\cite{CVJ02}. In~\cite{CDV01} it is stated that \fifo\ and \pipe\ are
unrelated according to the
worst-case efficiency preorder (unrelated means that the former process is not
more efficient than
the second one and vice versa). Similarly $\buff$ and \pipe\ are unrelated.
The authors provide good reasons for these results and also prove that \fifo\ is
more efficient than \buff\ but not vice versa.

As already stated in the introduction, the efficiency
preorder is based on arbitrary test environments, whereas we have only used
restricted environments
adequate for quantitative reasoning in this paper. To explain the results
of~\cite{CDV01}, we
consider the refusal trace $v=in \, \A \, \emptyset \, out\, \{in\} \in
\RT(\fifo) \backslash
\RT(\pipe)$, which can be understood as a witness of slow behaviour of $\fifo$,
justifying $\fifo
\not\sqsupseteq \pipe$. This trace tells us that $\fifo$ can perform two time
steps after an $in$
provided the environment does not offer a communication after the first one
($\fifo$ itself would
neither block $in$ nor $out$); then it can deliver the value and can now delay
$in$ (as after any
visible action).
Now we show that none of our users can be such a suitable context, i.e. that
\fifo\ cannot
participate in such a discrete trace $v$ when running in parallel with a user
$U_n$; hence, $v$
is not relevant for $rp_{\fifo}$.

$\fifo \,\|\, U_n \nar{in}{r} \fifo(1) \,\|\, (U_{n-1}
\,\|_{\{\omega\}} \underline{out}.\underline{\omega})
\nar{\A}{r} P' = \underline{\fifo}(1) \,\|\,(U_{n-1}
\,\|_{\{\omega\}} \underline{out}.\underline{\omega})$

Here, $\underline{\fifo}(1) = (\underline{in}.\fifo(2) +
\underline{out}.\fifo(0))$ can perform $\nar{\emptyset}{r}$ to itself;
but by the refusal semantics we could have $P' \nar{1}{}$ only if
$(U_{n-1} \,\|_{\{\omega\}} \underline{out}.\underline{\omega})$ is
able to refuse both $in$ and $out$. And this is clearly not the case. We are
currently working on this qualitative/quantitative issue by defining a slight
variation of the faster than preorder as given in \cite{CVJ02} to relate
processes w.r.t. the restricted class of tests ${\cal U}$ as in \cite{CV05} but
by some variant of refusal trace inclusion.

Our aim is to tune  $\fase$  to allow the analysis of larger systems, where the
performance module needs more attention since it implements the theories
introduced above. A first important result, we have already obtained, is the
improvement of the catastrophic-cycles detection; ensuring their absence is the
basis for any further performance analysis. 
A second result regards the calculation of the bad cycle, especially when we
consider complex processes. However, the graph $G'$ used in Karp's algorithm
could be very large, and we will investigate ways to minimise it. 
We are also working on a good strategy to determine the response performance of
$P$ for a given $n$. Different approaches are under investigation but they still
need to be validated. Currently, $\fase$ executes an exhaustive search on
$\RRTS(P)$ that looks for the $n$-critical path whose duration is maximal;
clearly as $n$ increases this solution becomes soon intractable, especially for
complex processes. Even though it is a rough solution, at least it helped to
validate the results on response performance presented in the above
propositions.

Anyhow, $\fase$ represents a good first step towards the creation of an
integrated framework for the analysis of concurrent systems modelled through
PAFAS. The improvements introduced with $\fase$ and the possibility to derive
the complete set of behavioural traces of the modelled system allowed us to
study and validate many results, such as the ones stated in this paper, that
would have been harder to calculate without an automated tool like $\fase$.
%
%

%
\end{document}